\def\ket#1{| #1 \rangle}
\def\bra#1{\langle #1 |}
\def\kb#1#2{|#1\rangle\!\langle #2 |}
\def\bk#1#2{\langle #1 |#2\rangle}
\def\be{\begin{eqnarray}}
\def\ee{\end{eqnarray}}
\def\bee{\begin{eqnarray*}}
\def\eee{\end{eqnarray*}}
\newtheorem{definition}{Definition}
\newtheorem{proposition}{Proposition}
\newtheorem{theorem}{Theorem}
\newtheorem{example}{Example}
\newtheorem{remark}{Remark}
\newtheorem{corollary}{Corollary}
\newcommand{\C}{{\mathbb C}}
\renewcommand{\H}{{\mathcal H}}
\newcommand{\operp}{$\bigcirc$\kern-.91em{$\perp$}}
\newcommand{\tr}{\operatorname{Tr}}
\newcommand{\spn}{\operatorname{span}}
\def\be{\begin{eqnarray}}
\def\ee{\end{eqnarray}}
\def\bee{\begin{eqnarray*}}
\def\eee{\end{eqnarray*}}
\def\ot{\otimes}
\begin{document}

\title[Distinguishing Quantum States via One-Way LOCC]{Operator and Graph Theoretic Techniques for Distinguishing Quantum States via One-Way LOCC}
\author[D.W.Kribs, C.Mintah, M.Nathanson, R.Pereira]{David W. Kribs$^{1}$, Comfort Mintah$^{1}$, Michael Nathanson$^2$, Rajesh Pereira$^{1}$}

\address{$^1$Department of Mathematics \& Statistics, University of Guelph, Guelph, ON, Canada N1G 2W1}
\address{$^2$Department of Mathematics, Harvard University, Cambridge, MA, USA 02138}

\begin{abstract}
We bring together in one place some of the main results and applications from our recent works in quantum information theory, in which we have brought techniques from operator theory, operator algebras, and graph theory for the first time to investigate the topic of distinguishability of sets of quantum states in quantum communication, with particular reference to the framework of one-way local quantum operations and classical communication (LOCC). We also derive a new graph-theoretic description of distinguishability in the case of a single qubit sender.
\end{abstract}

\subjclass[2010]{47L25, 47L90, 46B28, 81P15, 81P45, 81R15}

\keywords{quantum communication, quantum states, local operations and classical communication, operator system, operator algebra, quantum error correction, product states, graph theory.}


\maketitle
%

\section{Introduction}

The communication paradigm called local (quantum) operations and classical communication, usually denoted by its acronym LOCC, is fundamental to quantum information theory, and includes many central topics such as quantum teleportation, data hiding, and many of their derivations \cite{Teleportation, terhal2001hiding,eggeling2002hiding}. The somewhat more restricted version called one-way LOCC, in which communicating parties must perform their measurements in a prescribed order, has received expanded attention as being more tractable mathematically while still capturing many of the more important communication scenarios   \cite{Walgate-2000,Nathanson-2005,fan2004distinguishability,N13,cosentino2013small,yu2012four,kribs2017operator, kribsquantum2019,lattice2019,kribs2020vector}. A particularly important subclass of problems in the subject, involves the determination of when sets of known quantum states can be distinguished using only LOCC operations or some subset thereof.

Our work in the theory of LOCC \cite{kribs2017operator,kribsquantum2019,lattice2019,kribs2020vector} has for the first time brought techniques and tools from operator theory, operator algebras, and graph theory to the basic theory of quantum state distinguishability in one-way LOCC. Given the overlapping nature of some of our results and applications, including improvements on some results as our work progressed, we felt a review paper bringing together a selection of main features from our works could be a useful contribution to the literature. In addition to this exposition, we derive a new graph-theoretic description of one-way distinguishability in an important special case, that of a single qubit sender.

This paper is organized as follows. In Section~2 we give necessary preliminaries, including the mathematical description of one-way LOCC in terms of operator relations. Section~3 includes a brief introduction to the relevant operator structures in our analysis, and then we present some of our main results and applications from \cite{kribs2017operator,kribsquantum2019,lattice2019}. We finish in Section~4 by first giving a brief introduction to our necessary notions from graph theory, then we present one of our main results from \cite{kribs2020vector} and some examples, and we derive a new graph-theoretic description for the case of a single qubit sender.

\section{One-Way LOCC and Operator Relations}

We will use the traditional quantum information notation throughout the paper. In particular, we use the Dirac bra-ket notation for vectors, which labels a given fixed orthonormal basis for $\mathbb{C}^d$, with $d \geq 1$ fixed, as $\{ \ket{i} : 0 \leq i \leq d-1 \}$, and the corresponding dual vectors as $\ket{i}^* = \bra{i}$. For $n \geq 1$, $n$-qudit Hilbert space is the tensor product $(\mathbb{C}^d)^{\otimes n}$, which has an orthonormal basis given by $\ket{i_1 i_2 \cdots i_n}:= \ket{i_1} \otimes \ket{i_2} \otimes \ldots \otimes \ket{i_n}$.

We also denote the set of complex $m\times m$ matrices, for a fixed $m\geq 1$, by $M_m (\mathbb{C})$. Given a finite-dimensional Hilbert space $\mathcal H$, we will write $B(\mathcal H)$ for the algebra of bounded (continuous) linear operators on $\mathcal H$, which can be identified with $M_m (\mathbb{C})$ via matrix representations when $\dim \mathcal H = m$. The Pauli operators play an important role in many of our applications, and are given as matrix representations in the single qubit basis $\{ \ket{0}, \ket{1} \}$ for $\mathcal H = \mathbb C^2$ by:
\[
X = \left( \begin{matrix} 0 & 1 \\ 1 & 0 \end{matrix} \right), \quad
Y = \left( \begin{matrix} 0 & -i \\ i & 0 \end{matrix} \right), \quad
Z = \left( \begin{matrix} 1 & 0 \\ 0 & -1 \end{matrix} \right);
\]
where the operators are described by $X\ket{0} = \ket{1}$, $X\ket{1} = \ket{0}$, etc.


The basic set up for the LOCC framework is as follows: multiple parties share a set of quantum states, on which each party can perform local {\it quantum} operations. They can then transmit their results only using {\it classical} information in prescribed directions.

The key problem we have focussed on in our work is to distinguish quantum states amongst a set of known states, where two parties, called Alice ($A$) and Bob ($B$), can perform quantum measurements on their individual subsystems, and then communicate classically. Further, as general LOCC operations are very difficult to characterize mathematically, we have largely restricted ourselves to the case of {\it one-way} LOCC, where the communication is limited to one predetermined direction, generally from $A$ to $B$. This still captures many key examples and settings (though not all).

Hence the bipartite case we consider makes the following assumptions:
\begin{itemize}
\item Two parties $A$, $B$ are separated physically.
\item They control their (finite-dimensional) subsystem Hilbert spaces $\mathcal{H}_{A}$, $\mathcal{H}_{B}$; for simplicity, we often assume $\mathcal{H}_{A}= \mathcal{H}_{B}=\mathbb{C}^d$ for some fixed $d\geq 2$.
\item The state of the composite system $\mathcal{H}_{A} \otimes \mathcal{H}_{B}$  is assumed to be a pure state amongst a known set of  states  $\mathcal{S}  = \lbrace \ket{ \psi_{i}}  \rbrace_{i} \subseteq \mathcal{H}_{A} \otimes \mathcal{H}_{B}$.
\item The goal of $A$ and $B$ is then to identify   the particular  $i$  using only one-way LOCC measurements.
\end{itemize}

The mathematical description of measurements defined by one-way LOCC protocols is given as follows \cite{N13}.

\begin{definition}
A {\bf one-way LOCC measurement}, with $A$ going first, is a set of positive operators $\mathbb{M} = \lbrace A_{k} \otimes B_{k,j} \rbrace_{k,j}$ on $\mathcal{H}_{A} \otimes \mathcal{H}_{B}$ such that
\[
\sum_{k} A_{k} = I_{A} \quad \mathrm{and} \quad \sum_{j} B_{k,j} = I_{B} \quad \forall k.
\]
\end{definition}

Each of the sets $\{A_k\}_k$, $\{B_{k,j}\}_j$ form what is called a {\it positive operator valued measure} (POVM), on $\mathcal H_A$ and $\mathcal H_B$ respectively. If outcome $A_k \otimes B_{k,j}$ is obtained, for any $k$ and a particular $j$, the conclusion is the prepared state was the state identified with the pair $k,j$. Without loss of generality, one can further assume each $A_k$ is a scalar multiple of a (pure) rank one projection.

\begin{example}
As a very simple and illustrative example, consider the following two Bell basis two-qubit states:
\begin{eqnarray*}
\ket{\Phi _{0}} &=& \dfrac{1}{\sqrt{2}} \left( \ket{0}_{A} \ket{0}_{B} + \ket{1}_{A} \ket{1}_{B} \right)
\\ \ket{\Phi _{1}} &=& \dfrac{1}{\sqrt{2}} \left( \ket{0}_{A} \ket{1}_{B} + \ket{1}_{A} \ket{0}_{B} \right)
\end{eqnarray*}
This set is distinguishable, with the following measurement choices:
\begin{itemize}
\item Alice: $A_{1} = \ket{0}\bra{0}$ and $A_{2} = \ket{1} \bra{1}$.

\item  Bob: $\{B_{1,1}, B_{1,2}\} = \{ \ket{0} \bra{0}, \ket{1}\bra{1}\} = \{B_{2,1}, B_{2,2}\}$.
\end{itemize}

If Alice gets outcome $0$, then tells Bob, who after measurement gets outcome $0$, then the state is $ \ket{\psi_{0}}$. Similarly it would be $\ket{\psi_{1}}$ if Bob measured  a $1$.
\end{example}

Notationally, we shall let $\ket{\Phi}$ be the standard {\it maximally entangled state} on two-qudit space $\mathbb{C}^d \otimes\mathbb{C}^d$;
$
\ket{\Phi} = \frac{1}{\sqrt{d}}\big( \ket{00}+\ldots + \ket{d-1 \, d-1}\big).
$
We recall that every maximally entangled state on two-qudit space is then of the form $(I\otimes V)\ket{\Phi}$ where $V$ is unitary on $\mathbb C^d$.

The following result of Nathanson \cite{N13} frames one-way LOCC distinguishability in terms of operator relations and was the starting point for our collaboration.

\begin{theorem}
Let  $\{ U_i \}$ be operators on $\mathbb{C}^d$, and let $\mathcal{S} = \{ \ket{\psi_i}= (I\otimes U_i)\ket{\Phi}  \} \subseteq \mathbb{C}^d \otimes\mathbb{C}^d$ be a set of orthogonal states. Then the following conditions are equivalent:
\begin{itemize}
\item[$(i)$] The elements of $\mathcal{S}$ can be distinguished with one-way LOCC.
\item[$(ii)$] There exists a set of states $\{\ket{\phi_k}\}_{k=1}^r\subseteq \mathbb{C}^d$ and positive numbers $\{ m_k \}$ such that $\sum_{k} m_k \kb{\phi_k}{\phi_k} = I$ and for all $k$ and $i \ne j$,
\begin{equation*}
\bra{\phi_k} U^*_j U_i \ket{\phi_k} = 0.
\end{equation*}

\item[$(iii)$] There is a $d \times r$ partial isometry matrix $W$ such that $WW^* = I_d$, and for all $i \ne j$, every diagonal entry of the $r \times r$ matrix $W^*U_j^*U_i W$ is equal to zero.
\end{itemize}
\end{theorem}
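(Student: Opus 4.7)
My plan is to establish the equivalence (i)$\Leftrightarrow$(ii) using the structure of rank-one POVMs combined with the ricochet identity for the maximally entangled state, and then derive (ii)$\Leftrightarrow$(iii) by a direct matrix repackaging.

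For (i)$\Rightarrow$(ii), I would first reduce Alice's POVM $\{A_k\}$ to rank-one elements $A_k = m_k\kb{\phi_k}{\phi_k}$; this is harmless since any POVM admits a rank-one refinement that remains a one-way LOCC protocol and can only improve distinguishability. After Alice's outcome $k$, Bob's conditional subnormalized states $(\sqrt{A_k}\otimes I)\ket{\psi_i}$ must be perfectly distinguishable by his POVM $\{B_{k,j}\}_j$, which forces them to be pairwise orthogonal: $\bra{\psi_j}(A_k\otimes I)\ket{\psi_i}=0$ for $i\ne j$. Substituting $\ket{\psi_i}=(I\otimes U_i)\ket{\Phi}$ and applying the ricochet identity $(M\otimes I)\ket{\Phi}=(I\otimes M^T)\ket{\Phi}$ together with $\bra{\Phi}(I\otimes N)\ket{\Phi}=\tfrac{1}{d}\tr N$, this matrix element becomes
\[
\bra{\psi_j}(A_k\otimes I)\ket{\psi_i} \,=\, \tfrac{1}{d}\tr(U_j^* U_i A_k^T) \,=\, \tfrac{m_k}{d}\,\bra{\overline{\phi_k}}U_j^* U_i\ket{\overline{\phi_k}},
\]
where $\ket{\overline{\phi_k}}$ denotes the entrywise complex conjugate of $\ket{\phi_k}$. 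Since conjugating the completeness relation yields $\sum_k m_k\kb{\overline{\phi_k}}{\overline{\phi_k}}=I$, the relabeling $\ket{\overline{\phi_k}}\mapsto\ket{\phi_k}$ is absorbed and condition (ii) follows. For (ii)$\Rightarrow$(i), one defines Alice's POVM by $A_k = m_k\kb{\phi_k}{\phi_k}$; the conditional Bob-states are then pairwise orthogonal, so he can perfectly distinguish them with, e.g., the POVM given by projections onto their one-dimensional supports padded by the orthogonal complement.

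For (ii)$\Leftrightarrow$(iii), I would form the $d\times r$ matrix $W$ whose $k$-th column is $\sqrt{m_k}\,\ket{\phi_k}$. Then $WW^* = \sum_k m_k\kb{\phi_k}{\phi_k}$, so $WW^*=I_d$ iff the completeness relation of (ii) holds, and this automatically makes $W$ a co-isometry and hence a partial isometry. The $(k,k)$ diagonal entry of $W^*U_j^*U_iW$ equals $m_k\bra{\phi_k}U_j^*U_i\ket{\phi_k}$, and these all vanish exactly when the orthogonality relations of (ii) do. Running the construction in reverse, given $W$ one reads off its columns $w_k$, sets $m_k=\|w_k\|^2$ and $\ket{\phi_k}=w_k/\sqrt{m_k}$ (discarding zero columns), and recovers (ii).

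The main obstacle I anticipate is cleanly justifying the rank-one refinement of Alice's measurement and then being careful about the transpose/complex-conjugation bookkeeping introduced by the ricochet identity; once those are in place, the remaining computation is mechanical and the passage to (iii) is purely formal.
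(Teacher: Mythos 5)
Your proposal is correct and is essentially the standard argument (the proof in Nathanson's cited paper, which this survey quotes without reproving): rank-one refinement of Alice's POVM, the ricochet identity $\bra{\Phi}(M\otimes N)\ket{\Phi}=\tfrac{1}{d}\tr(M^{T}N)$, and the column-matrix repackaging for the equivalence with (iii). The one detail to tighten is the (ii)$\Rightarrow$(i) direction: if Alice measures $A_k=m_k\kb{\phi_k}{\phi_k}$, Bob's conditional states are proportional to $U_i\ket{\overline{\phi_k}}$, whose orthogonality is the conjugated version of (ii); so Alice should instead use $\{m_k\kb{\overline{\phi_k}}{\overline{\phi_k}}\}$ (which still sums to $I$), making Bob's states $U_i\ket{\phi_k}$ so that (ii) applies directly --- exactly the conjugation bookkeeping you flagged, handled correctly in your forward direction.
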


Conceptually, the states $\ket{\phi_k}$ are determined by Alice's (rank one) measurement operators, and the orthogonality of the states $\{ U_i \ket{\phi_k} \}_i$ for every $k$, allows Bob to distinguish $i$.
In the example above, note that $\ket{\Phi _{0}} = (I_2 \otimes I_2) \ket{\Phi}$ and $\ket{\Phi_1} =  (I_2\otimes X)\ket{\Phi}$,
where $X$ is the single-qubit Pauli bit flip operator. Here we have $M_1 = I_2$, $M_2 = X$, and $d=2=r$. So we can take $W = I_2$ and note that $M_j^* M_i = X$ for $i\neq j$.

\section{Operator Structures and One-Way LOCC}

Our initial work \cite{kribs2017operator} identified the importance of certain operator structures for distinguishing various sets of quantum states using one-way LOCC. The following result encompassed our first observation and readily follows from Nathanson's result.  It suggested deeper operator theoretic connections to the mathematics of one-way LOCC lying in the background. 

Let $\Delta : M_d (\mathbb{C}) \rightarrow M_d (\mathbb{C})$ be the `map to diagonal' on $d \times d$ matrices; that is, $\Delta$ zeros out all off-diagonal entries of a matrix but leaves its diagonal entries unchanged, and so there is an orthonormal basis $\{ \ket{k} \}$ for $\mathbb{C}^d$ such that $\Delta(\rho) = \sum_{k=1}^d \kb{k}{k} \rho \kb{k}{k}$ is the (von Neumann) measurement map defined by the basis.

\begin{proposition}  Let $\{ P_k\}_{k=1}^n$ be a set of $d\times d$ permutation matrices and let $\mathcal{S}= \{(I\otimes P_k)\ket{\Phi}\}$ be the set of corresponding maximally entangled states on $\C^d\otimes\C^d$.  Then the following conditions are equivalent:
\begin{enumerate}
\item The states in $\mathcal{S}$ are distinguishable by one-way LOCC.

\item $\Delta(P_j^*P_i)=0$ whenever $i\neq j$.

\end{enumerate}
\end{proposition}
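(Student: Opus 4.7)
The plan is to deduce both directions directly from Nathanson's theorem, taking advantage of the fact that a product of permutation matrices is again a permutation matrix, hence a $0$-$1$ matrix with exactly one nonzero entry per row and column.

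For the implication $(2) \Rightarrow (1)$, I would simply invoke condition (iii) of Nathanson's theorem with $r = d$ and $W = I_d$. Since then $WW^* = I_d$ and $W^* P_j^* P_i W = P_j^* P_i$, the hypothesis $\Delta(P_j^* P_i) = 0$ is exactly the requirement that every diagonal entry of $W^* P_j^* P_i W$ vanishes for $i \neq j$, so one-way distinguishability follows immediately.

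For the converse $(1) \Rightarrow (2)$, I would use condition (ii) of Nathanson's theorem to produce states $\{\ket{\phi_k}\}_{k=1}^r$ and positive weights $\{m_k\}$ with $\sum_k m_k \kb{\phi_k}{\phi_k} = I$ and $\bra{\phi_k} P_j^* P_i \ket{\phi_k} = 0$ for all $k$ and all $i \neq j$. Setting $Q = P_j^* P_i$, cyclicity of the trace together with the resolution of the identity gives
\[
\tr(Q) = \tr\Bigl(Q \sum_k m_k \kb{\phi_k}{\phi_k}\Bigr) = \sum_k m_k \bra{\phi_k} Q \ket{\phi_k} = 0.
\]
The crucial structural observation is now that $Q$, being a product of permutation matrices, is itself a permutation matrix, so each of its diagonal entries lies in $\{0,1\}$. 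A sum of nonnegative integers vanishes only if each summand vanishes; hence every diagonal entry of $Q = P_j^* P_i$ is zero, which is precisely $\Delta(P_j^* P_i) = 0$.

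There is really no serious obstacle here: the entire argument hinges on the elementary but essential fact that permutation matrices have $\{0,1\}$-valued diagonals, so that the trace condition forced by Nathanson's theorem upgrades automatically to the much stronger conclusion that the full diagonal is zero. For general unitaries $U_i$ in place of $P_i$, this step would fail, which is why the proposition is phrased specifically for permutation matrices.
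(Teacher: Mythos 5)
Your proof is correct and takes the paper's intended route: the paper offers no separate argument, saying only that the proposition ``readily follows'' from Nathanson's theorem, and your two applications of its conditions (iii) (with $W=I_d$) and (ii) --- together with the key structural fact that $P_j^*P_i$ is a permutation matrix with $\{0,1\}$ diagonal entries, so vanishing trace forces a vanishing diagonal --- supply exactly the details the paper leaves implicit. (One tacit step worth flagging: to invoke Nathanson's theorem from (1) you need the states to be orthogonal, which does follow since perfectly distinguishable pure states must be orthogonal; in fact that orthogonality alone already gives $\Tr(P_j^*P_i)=0$, so your trace argument could bypass condition (ii) entirely.)
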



The null space of the map to diagonal operator has a special structure, it is a linear subspace which is closed under taking adjoints.  This observation led us to consider the following notions for the first time in the context of LOCC state distinguishability. First we recall the basic structure theory for finite-dimensional $C^*$-algebras, for instance as exhibited in \cite{davidson1996c,paulsen2002completely}. Every such algebra  is unitarily equivalent to an orthogonal direct sum of ampliated full matrix algebras of the form $\bigoplus_{k}(M_{m_k}(\mathbb{C})\otimes I_{n_k})$ for some positive integers $m_k, n_k \geq 1$. Further, the algebra is unital if it contains the identity operator.

\begin{definition}
Let $\mathfrak{A}$ be a unital $C^*$-algebra.  Any linear subspace  $\mathfrak{S}$ contained in $\mathfrak{A}$ which contains the identity and is closed under taking adjoints is called an {\bf operator system}.
\end{definition}

Within the setting of such operator structures, the following notion turns out to be key for us.

\begin{definition}
Let $\mathcal{H}$ be a Hilbert space and let $\mathfrak{S}\subseteq B(\mathcal{H})$ be a set of operators on $\mathcal{H}$ that form an operator system.  A vector $\ket{\psi}\in \mathcal{H}$ is said to be a {\bf separating vector} for $\mathfrak{S}$ if $A\ket{\psi}\neq 0$ whenever $A$ is a nonzero element of $\mathfrak{S}$; in other words, $A\ket{\psi} = B\ket{\psi}$ with $A,B\in \mathfrak S$ implies $A=B$.
\end{definition}

If $\mathcal{H}$ is finite-dimensional and $\mathfrak{S}$ is closed under multiplication, and hence a $C^*$-algebra, then we may use the decomposition above for such algebras to determine the existence of a separating vector as follows. This result was proved in \cite{pereira2}.

\begin{theorem}
The $C^*$-algebra $\bigoplus_{k} (M_{m_k}(\mathbb{C})\otimes I_{n_k})$ has a separating vector if and only if $n_k \ge m_k$ for all $k$.
\end{theorem}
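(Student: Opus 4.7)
The plan is to reduce the statement to a block-by-block analysis and then settle each block using the Schmidt decomposition.

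First I would observe that a general vector in the representation space $\bigoplus_k (\C^{m_k} \otimes \C^{n_k})$ decomposes as $\ket{\psi} = \bigoplus_k \ket{\psi_k}$, and a general element of the algebra acts block-diagonally as $A = \bigoplus_k (A_k \otimes I_{n_k})$. Hence $A\ket{\psi} = 0$ amounts to $(A_k \otimes I_{n_k})\ket{\psi_k} = 0$ for every $k$. This reduces the problem to showing that, for a single summand, $\ket{\psi_k}$ is a separating vector for $M_{m_k}(\C) \otimes I_{n_k}$ acting on $\C^{m_k} \otimes \C^{n_k}$ if and only if $n_k \geq m_k$, since if even one block fails to have a separating vector we may place a nonzero $A_k$ there (and zero elsewhere) to spoil separation, while if each block has a separating vector $\ket{\psi_k}$, their direct sum separates the whole algebra.

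For a single block, I would apply the Schmidt decomposition: write $\ket{\psi_k} = \sum_{i=1}^{r} \sigma_i \ket{e_i} \otimes \ket{f_i}$ with $\sigma_i > 0$, so that $r \leq \min(m_k, n_k)$ and $\{\ket{e_i}\}$, $\{\ket{f_i}\}$ are orthonormal sets. Then
\[
(A_k \otimes I_{n_k})\ket{\psi_k} = \sum_{i=1}^{r} \sigma_i (A_k \ket{e_i}) \otimes \ket{f_i},
\]
which vanishes (using linear independence of $\{\ket{f_i}\}$) exactly when $A_k$ annihilates $\mathrm{span}\{\ket{e_1}, \ldots, \ket{e_r}\}$. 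Thus $\ket{\psi_k}$ is separating if and only if the only $A_k \in M_{m_k}(\C)$ killing this span is zero, i.e.\ the span is all of $\C^{m_k}$, equivalently $r = m_k$.

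The necessity direction then follows: if $n_k < m_k$ for some $k$, then any $\ket{\psi_k}$ has Schmidt rank at most $n_k < m_k$, so one can pick a nonzero $\ket{u} \in \C^{m_k}$ orthogonal to $\ket{e_1}, \ldots, \ket{e_r}$ and take $A_k = \ket{u}\bra{u}$, a nonzero element annihilating $\ket{\psi_k}$. For sufficiency, assuming $n_k \geq m_k$ for all $k$, I would exhibit the explicit choice
\[
\ket{\psi_k} = \frac{1}{\sqrt{m_k}} \sum_{i=1}^{m_k} \ket{i} \otimes \ket{i} \in \C^{m_k} \otimes \C^{n_k},
\]
(possible because $m_k \leq n_k$), which has Schmidt rank $m_k$, and then set $\ket{\psi} = \frac{1}{\sqrt{N}} \bigoplus_k \ket{\psi_k}$ with $N$ the number of blocks; by the reduction above this separates $\bigoplus_k (M_{m_k}(\C) \otimes I_{n_k})$. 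The only non-routine step is the Schmidt-rank characterization in the middle paragraph, but this is essentially immediate from linear independence, so I do not expect genuine obstacles.
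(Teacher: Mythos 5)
Your proof is correct and complete: the block-diagonal reduction is valid because the blocks $A_k$ can be chosen independently, and the Schmidt-decomposition argument correctly shows that a vector in $\C^{m_k}\otimes\C^{n_k}$ separates $M_{m_k}(\C)\otimes I_{n_k}$ precisely when its Schmidt rank equals $m_k$, which is achievable if and only if $n_k\ge m_k$. Note that the paper itself does not prove this theorem but only cites it (to the reference on separating/trace vectors); the classical route there goes through the duality that a vector is separating for a von Neumann algebra if and only if it is cyclic for the commutant $\bigoplus_k (I_{m_k}\otimes M_{n_k}(\C))$, whereas your argument is a direct, elementary computation with the Schmidt decomposition --- it buys self-containedness at no real cost, and in fact the maximally entangled vectors you exhibit are exactly the standard witnesses used in that literature. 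One cosmetic remark: the normalization factor $1/\sqrt{N}$ is irrelevant, since the separating property is insensitive to scaling.
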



In the case of the diagonal algebra $\mathfrak{A}_\Delta$, the set of $d \times d$ diagonal matrices (and so $\mathfrak{A}_\Delta \cong \mathbb{C}^d$),  we have $m_k= n_k = 1$ for all $1\leq k \leq d$, and hence $\mathfrak{A}_\Delta$ has a separating vector; an example of which can easily be written down:
$
\ket{\psi} = \frac{1}{\sqrt{d}}( \ket{0} + \ldots + \ket{d-1}).
$

Taken together, these notions and our early results led us to the following general theorem on operator structures and one-way LOCC distinguishability. The first version of the result was proved in \cite{kribs2017operator}, and the refined improvement as stated below was established in \cite{kribsquantum2019}.

\begin{theorem}\label{opthm}
Let $\{U_i\}$ be a set of operators on $\mathbb{C}^d$ and suppose the operator system $\mathfrak{S}_0 = \mathrm{span}\,\{U_i^* U_j, I \}_{i \neq j}$ is closed under multiplication and hence is a C$^*$-algebra. Then $\mathcal S = \{(I\otimes U_i)\ket{\Phi}\}$ is distinguishable by one-way LOCC if and only if  $\mathfrak{S}_0$ has a separating vector.
\end{theorem}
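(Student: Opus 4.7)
The plan is to bridge Nathanson's Theorem with the separating-vector criterion via the normalized trace $\tau(A) := \tfrac{1}{d}\Tr(A)$. Since the states in $\mathcal S$ are orthogonal (a necessary condition for distinguishability, assumed implicitly), $\Tr(U_j^*U_i) = 0$ whenever $i\neq j$, so the subspace $V_0 := \spn\{U_j^*U_i : i\neq j\} \subseteq \mathfrak{S}_0$ consists of traceless operators and $\mathfrak{S}_0 = V_0 \oplus \mathbb{C} I$. Consequently $\tau$ is the unique state on $\mathfrak{S}_0$ vanishing on $V_0$, and it is faithful because it is the restriction of the faithful normalized trace on $M_d(\mathbb{C})$.

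For the forward direction, Nathanson's Theorem provides a POVM $\{m_k\kb{\phi_k}{\phi_k}\}_k$ with $\bra{\phi_k}T\ket{\phi_k} = 0$ for every $T \in V_0$; together with $\bk{\phi_k}{\phi_k} = 1$, this forces $\bra{\phi_k}A\ket{\phi_k} = \tau(A)$ for all $A \in \mathfrak{S}_0$. Then for any nonzero $A \in \mathfrak{S}_0$, $A^*A \in \mathfrak{S}_0$ is nonzero and positive, so faithfulness of $\tau$ gives $\|A\ket{\phi_k}\|^2 = \tau(A^*A) > 0$. Hence each $\ket{\phi_k}$ is itself a separating vector for $\mathfrak{S}_0$.

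For the converse, use the structure theorem to write $\mathfrak{S}_0 \cong \bigoplus_l (M_{m_l}(\mathbb{C})\otimes I_{n_l})$, with $n_l \geq m_l$ for all $l$ by the separating-vector characterization above. The strategy is to first produce a single unit vector $\ket{\psi}$ that is both separating for $\mathfrak{S}_0$ and realizes $\tau$ as a vector state, and then twirl it into a POVM. In each block pick an isometry $J_l : \mathbb{C}^{m_l}\to \mathbb{C}^{n_l}$, form $\ket{\Psi_l} := m_l^{-1/2}\sum_{i=1}^{m_l}\ket{i}\otimes J_l\ket{i}$, and set $\ket{\psi} := \bigoplus_l \sqrt{m_l n_l/d}\,\ket{\Psi_l}$. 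Using $\sum_l m_l n_l = d$, a direct calculation yields $\|\ket{\psi}\| = 1$ and $\bra{\psi}A\ket{\psi} = \Tr(A)/d = \tau(A)$ on $\mathfrak{S}_0$, while the isometry $J_l$ guarantees each block component has Schmidt rank $m_l$, so $\ket{\psi}$ is separating.

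Finally, for any unitary $W \in \mathfrak{S}_0'$, the commutation $WT = TW$ for $T \in \mathfrak{S}_0$ gives $\bra{\psi}W^*TW\ket{\psi} = \bra{\psi}T\ket{\psi} = 0$ whenever $T \in V_0$, so $W\ket{\psi}$ retains the vanishing property. Haar-averaging $W\kb{\psi}{\psi}W^*$ over the unitary group of $\mathfrak{S}_0'$ lands in $\mathfrak{S}_0 \cap \mathfrak{S}_0' = Z(\mathfrak{S}_0)$ and, with the calibration above, evaluates to $I/d$; by Carath\'eodory, finitely many unitaries $W_k \in \mathfrak{S}_0'$ and weights $q_k > 0$ satisfy $\sum_k q_k W_k\kb{\psi}{\psi}W_k^* = I/d$. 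Setting $\ket{\phi_k} := W_k\ket{\psi}$ and $m_k := dq_k$ yields the POVM required by Nathanson's Theorem, establishing distinguishability. The principal obstacle is calibrating the block weights $\sqrt{m_l n_l/d}$ so that $\ket{\psi}$ plays the dual role of separating witness and trace realizer; the twirling step afterwards is routine.
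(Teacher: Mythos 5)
Your proposal is correct, and its sufficiency direction follows a genuinely different route from the one indicated in the paper. The paper's sketch (and the cited originals) starts from a separating vector $\ket{\psi}$ for $\mathfrak{S}_0$ and argues that the conditional states $\{U_i\ket{\psi}\}$ are linearly independent, building the protocol around Bob exploiting this together with Alice's outcome. You never use the given separating vector directly: you invoke Pereira's characterization only to secure $n_l\ge m_l$, then construct a special unit vector whose vector state restricts to the normalized trace $\tau$ on $\mathfrak{S}_0$ (the block weights $\sqrt{m_l n_l/d}$ are calibrated correctly for both the norm and the trace identity), and finally use averaging over unitaries of the commutant plus Carath\'eodory to turn the orbit of that vector into a rank-one POVM satisfying condition (ii) of Nathanson's theorem. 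Your necessity direction --- identifying each vector state $\bra{\phi_k}\cdot\ket{\phi_k}$ with $\tau$ on $\mathfrak{S}_0$ via orthogonality of $\mathcal{S}$ and then using faithfulness of the trace --- is essentially the standard argument, and it even gives slightly more: every vector appearing in an optimal measurement for Alice is separating. What your route buys is an explicit recipe for Alice's POVM and a clean reduction of both directions to Nathanson's operator relations; what the paper's route buys is a more direct operational picture of how Bob uses Alice's outcome. You are also right to make explicit that orthogonality of $\mathcal{S}$ is implicitly assumed; it is needed both to apply Nathanson's theorem and to make $V_0$ traceless.

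One technical imprecision, which does not damage the argument: the Haar average $Q$ of $W\kb{\psi}{\psi}W^*$ over the unitary group of $\mathfrak{S}_0'$ lands in $(\mathfrak{S}_0')'=\mathfrak{S}_0$, not a priori in the center $Z(\mathfrak{S}_0)$; averaging over the unitaries of an algebra projects onto its commutant, not onto the center. The correct way to finish is the one your calibration already supports: for $A\in\mathfrak{S}_0$ and $W\in\mathfrak{S}_0'$ one has $W^*AW=A$, so $\Tr(QA)=\bra{\psi}A\ket{\psi}=\Tr(A)/d$, and nondegeneracy of the trace pairing on the $*$-closed algebra $\mathfrak{S}_0$ forces $Q=I/d$ (a direct blockwise computation gives the same). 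With that repair of the justification, the twirling step and the resulting POVM are exactly as you claim.
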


The proof of the theorem starts with the observation that if $\mathfrak S_0$ has a separating vector $\ket{\psi}$, then the states $\{ U_i \ket{\psi} \}$ are linearly independent and Bob can use this fact together with Alice's outcome to distinguish the states.

As a straightforward application of the theorem, consider the following class of states. We recall that a set of matrices $\{ U_k\}$ have a
{\it simultaneous Schmidt decomposition} if there are unitary matrices $V$ and $W$ and complex diagonal matrices $\{ D_k\}$ such that  for each $k$, $U_k=VD_kW$.

\begin{corollary}
Any set of orthonormal states $\{(I\otimes U_i)\ket{\Phi}\}_{i=1}^n$, for which the matrices $U_i$ have a simultaneous Schmidt decomposition, are distinguishable by one-way LOCC.
\end{corollary}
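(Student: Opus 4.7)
The plan is to invoke Theorem \ref{opthm} by exhibiting a separating vector. Writing $U_i = V D_i W$ from the simultaneous Schmidt decomposition, the first step is to compute
\[
U_i^* U_j = W^* D_i^* V^* V D_j W = W^*(D_i^* D_j)W,
\]
which shows the operator system $\mathfrak{S}_0 = \mathrm{span}\{U_i^* U_j, I\}_{i\neq j}$ sits inside the commutative $C^*$-algebra $W^*\mathfrak{A}_\Delta W$. Since $\mathfrak{A}_\Delta$ has $m_k = n_k = 1$ in its structural decomposition, the vector $\tfrac{1}{\sqrt{d}}(\ket{0}+\cdots+\ket{d-1})$ is separating for it; conjugating by $W^*$ yields a separating vector for $W^*\mathfrak{A}_\Delta W$, and hence for the subspace $\mathfrak{S}_0$ as well.

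To obtain a fully self-contained proof I would go directly through Nathanson's theorem from Section~2, taking $\ket{\phi_k} = W^* F\ket{k}$ for $k = 0,\ldots,d-1$, where $F$ is the discrete Fourier transform on $\mathbb{C}^d$. The key point is that $|\langle\ell|W\phi_k\rangle|^2 = 1/d$ uniformly in $\ell$, so with $m_k = 1$ the completeness relation $\sum_k \kb{\phi_k}{\phi_k} = W^* F F^* W = I$ is immediate, while
\[
\bra{\phi_k} U_i^* U_j \ket{\phi_k} = \tfrac{1}{d}\operatorname{tr}(D_i^* D_j) = \tfrac{1}{d}\operatorname{tr}(U_i^* U_j)
\]
vanishes for $i\neq j$ by the assumed orthonormality of $\{(I\otimes U_i)\ket{\Phi}\}$. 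Condition (ii) of Nathanson's theorem then delivers one-way LOCC distinguishability.

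The main obstacle is conceptual rather than computational: for an arbitrary simultaneous-Schmidt family, $\mathfrak{S}_0$ need not be multiplicatively closed, so a verbatim appeal to Theorem \ref{opthm} is delicate. The Fourier choice of $\{\ket{\phi_k}\}$ is precisely what is needed to equidistribute $|\langle\ell|W\phi_k\rangle|^2$ across $\ell$, and thereby reduce each required orthogonality $\bra{\phi_k} U_i^* U_j \ket{\phi_k} = 0$ to the trace condition already guaranteed by the orthonormality hypothesis.
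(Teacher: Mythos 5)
Your argument is correct, and it both mirrors and usefully sharpens the paper's route. The paper's proof sketch is exactly your first paragraph: from $U_i = VD_iW$ one gets $U_i^*U_j = W^*(D_i^*D_j)W$, so $\mathfrak{S}_0$ sits inside the abelian algebra $W^*\mathfrak{A}_\Delta W \cong \mathbb{C}^d$, which has a separating vector, and Theorem~\ref{opthm} is then invoked (the paper in fact asserts equality $\mathfrak{S}_0 = W^*\mathfrak{A}_\Delta W$, though in general only the containment you state is guaranteed). Your worry about multiplicative closure is legitimate for the theorem exactly as stated in this review; it is resolved in the cited original work, where the sufficiency direction is phrased for the \emph{algebra generated by} $\mathfrak{S}_0$ — here a unital subalgebra of an abelian algebra, hence always possessing a separating vector — so the paper's appeal is sound in spirit. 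What your second paragraph adds is a genuinely self-contained completion that bypasses Theorem~\ref{opthm}: choosing $\ket{\phi_k} = W^*F\ket{k}$ (a basis mutually unbiased to the common Schmidt basis, i.e.\ the columns of $W^*$) gives $\sum_k \kb{\phi_k}{\phi_k} = I$ and $\bra{\phi_k}U_i^*U_j\ket{\phi_k} = \tfrac{1}{d}\mathrm{Tr}(D_i^*D_j) = \tfrac{1}{d}\mathrm{Tr}(U_i^*U_j) = 0$ for $i\neq j$, since $\bk{\Phi}{(I\otimes U_i^*U_j)\Phi} = \tfrac{1}{d}\mathrm{Tr}(U_i^*U_j)$; Nathanson's condition (ii) then applies directly. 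The operator-algebraic route situates the corollary within the separating-vector framework that drives the rest of the paper, while your explicit Fourier-basis POVM yields a concrete protocol (Alice measures in the mutually unbiased basis, Bob's conditional states are orthogonal) and avoids the closure subtlety altogether; both are valid, and your calculation is error-free.
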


The basic idea of the proof in this case, is to note the operator system structure satisfies $\mathfrak{S}_0 = \mathrm{span}\,\{U_i^* U_j, I \}_{i \neq j}=W^*\mathfrak{A}_\Delta W$ for some unitary $W$.  Since $W^*\mathfrak{A}_\Delta W \cong \mathbb{C}^d$ has a separating vector and hence Theorem \ref{opthm} applies. We note that the operator system $\mathfrak{S}_0$ was studied in a similar context in \cite{duan2013zero}.

\begin{remark}
Towards further applications, including those discussed below, note how the theorem gives a road map to generate sets of indistinguishable states based on these operator structures. Given the decomposition of the algebra generated by an operator system $\mathfrak{A} = \mathrm{Alg}\,(\mathfrak{S}_0) \cong \oplus_k (M_{m_k}(\mathbb{C})\otimes I_{n_k})$, then $\mathfrak{A}$ has a separating vector if and only if $n_k\geq m_k$ for all $k$. Thus, to find instances of sets of {\it indistinguishable states}, we can look for sets $\{U_i\}$ such that $\mathfrak{S}_0 = \mathfrak{A}$ and $m_k > n_k$ for some $k$.
Hence, we are led to consider sets of operators $\{U_i\}$, such that the set is closed under multiplication, taking adjoints, and taking inverses (up to scalar multiples).
\end{remark}

\subsection{Application: States from the Stabilizer Formalism for Quantum Error Correction}

In \cite{kribsquantum2019}, we developed connections between quantum error correction \cite{sho95a,ste96a,bennett1996mixed,gottesmanstab,knilllaflamme,klp05}
and the study of one-way LOCC, including the fact that every one-way LOCC protocol naturally defines a quantum error correcting code defined by the distinguishable states. This also led to new derivations of some known results and new examples of distinguishable states. Here we present one of the applications from that paper.

Sets of unitary operators with the features discussed above are plentiful in one of the foundational areas of quantum error correction, the `stablizer formalism' \cite{gottesmanstab}, which gives a toolbox for generating and identifying codes from the Pauli group.

Let $\mathcal{P}_{n}$ be the $n$-qubit Pauli group; that is, the unitary subgroup on $(\mathbb{C}^2)^{\otimes n}$ with generating set as follows: $$\mathcal{P}_{n} : = \langle  \pm iI; X_{j}, Y_j, Z_{j} : 1 \leq j \leq n \rangle\,,$$
where  $X_1 = X\otimes I\otimes \cdots \otimes I = X \otimes I^{\otimes (n-1)}$, etc.

The {\it Clifford group} is the normalizer subgroup of $\mathcal P_n$ inside the group of $n$-qubit unitary operators. It is known that if $G$ is a subgroup of $\mathcal{P}_n$, with $S_0 = \{ g_1,\ldots g_m\}$ a minimal generating set for a maximal abelian subgroup $S$ of $G$, then there exists a unitary $U$ in the Clifford group such that $U^* g_j U = Z_j$, for all $1\leq j \leq m$. This allows a focus on the generating Pauli operators for deriving more general results.

In \cite{kribsquantum2019} we proved the following. For succinctness we use terminology from the stabilzer formalism in the theorem hypotheses without giving precise details here, as they are not necessary to appreciate the result.

\begin{theorem}
Let $\{ U_i \} \subseteq \mathcal{P}_{n}$ be a complete set of $4^k$ encoded logical Pauli operators for a stabilizer $k$-qubit code on $n$-qubit Hilbert space. Then the set of states $\mathcal S = \{ (I\otimes U_i)\ket{\Phi} \}$ is distinguishable by one-way LOCC if and only if $k \leq \frac{n}{2}$.
\end{theorem}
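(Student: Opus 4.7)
The plan is to apply Theorem~\ref{opthm} directly: I would show that the operator system $\mathfrak{S}_0 = \mathrm{span}\,\{U_i^* U_j, I\}_{i\neq j}$ is a $C^*$-algebra, identify its block structure explicitly, and then read off the existence of a separating vector from the inequality in the separating vector theorem. The target inequality is $2^{n-k}\geq 2^k$, which is exactly $k\leq n/2$.

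The first step is to note that products of Pauli operators are Pauli operators up to a scalar phase, so $\{U_i^* U_j : i,j\}\cup\{I\}$ is closed under multiplication. Moreover, because the $\{U_i\}$ form a complete set of $4^k$ logical Pauli representatives (a group of order $4^k$ modulo the stabilizer and overall phases), the products $\{U_i^*U_j\}_{i\neq j}$ together with $I$ have the same linear span as the $4^k$ logical Paulis themselves. Thus $\mathfrak{S}_0$ is a unital $C^*$-algebra, and the hypothesis of Theorem~\ref{opthm} is satisfied.

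Next I would reduce to a canonical form by Clifford conjugation. Using the identity $(V^T \otimes V^*)(I\otimes U_i)\ket{\Phi} = (I\otimes V^* U_i V)\ket{\Phi}$ for any $n$-qubit unitary $V$, together with the invariance of one-way LOCC distinguishability under local unitaries on each party, one may replace each $U_i$ by $V^* U_i V$. By the Clifford normalization fact recalled just before the theorem, there is a Clifford $V$ that sends the stabilizer to $\langle Z_1,\ldots,Z_{n-k}\rangle$; after absorbing stabilizer elements and phases (which modifies neither $\mathfrak{S}_0$ as a linear subspace nor the states as rays) the logical Pauli representatives become exactly the $4^k$ operators of the form $I^{\otimes (n-k)}\otimes Q$ with $Q$ ranging over the $k$-qubit Pauli group $\mathcal{P}_k$. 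Because the $k$-qubit Paulis span $M_{2^k}(\mathbb{C})$, we obtain
\[
\mathfrak{S}_0 \;=\; I_{2^{n-k}}\otimes M_{2^k}(\mathbb{C}),
\]
which (after a permutation of tensor factors) is unitarily equivalent to $M_{2^k}(\mathbb{C})\otimes I_{2^{n-k}}$. In the notation of the separating vector theorem this has a single block with $m_1=2^k$ and $n_1=2^{n-k}$, and therefore admits a separating vector if and only if $2^{n-k}\geq 2^k$, i.e.\ $k\leq n/2$. Combining with Theorem~\ref{opthm} gives the desired equivalence.

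The main obstacle I expect is the bookkeeping in the Clifford reduction step: one has to check that the chosen representatives really can be put into the clean form $I^{\otimes(n-k)}\otimes Q$ and not merely into a coset of the stabilizer, and that the residual stabilizer and phase adjustments can be absorbed without altering either the span $\mathfrak{S}_0$ or the one-way LOCC distinguishability of $\{(I\otimes U_i)\ket{\Phi}\}$. Once this is verified carefully, the rest of the argument is a direct and short application of the two cited theorems.
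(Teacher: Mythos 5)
Your proposal follows essentially the same route as the paper's own argument: reduce by a Clifford conjugation (using local unitary invariance of one-way LOCC and the identity $(V^T\otimes V^*)(I\otimes U_i)\ket{\Phi}=(I\otimes V^*U_iV)\ket{\Phi}$) to the canonical logical Paulis, identify $\mathfrak{S}_0$ with $M_{2^k}(\mathbb{C})\otimes I_{2^{n-k}}$, and invoke Theorem~\ref{opthm} together with the separating vector criterion $2^{n-k}\geq 2^k$. The bookkeeping issue you flag (normalizing representatives modulo stabilizer elements and phases) is exactly the point the paper compresses into ``up to unitary equivalence,'' so your outline matches the paper's proof in both structure and substance.
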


The basic idea behind this proof as an application of the result above is as follows: The $4^k$ element set $\mathcal{P}_{n,k} =  \langle X_{j}, Z_{j} : 1 \leq j \leq k \rangle / \{ \pm iI\}$, form a set of encoded operations for the code subspace (up to unitary equivalence). But $\mathfrak{S}_0 := \mathrm{span}\,(\mathcal P_{n,k}) = \mathrm{Alg}\,(\mathcal P_{n,k}) = M_{2^k}\otimes I_{2^{n-k}}$. Hence, from the theorem above, the states $\mathcal S$ are distinguishable by one-way LOCC if and only if $\mathfrak{S}_0$ has a separating vector if and only if $2^k \leq 2^{n-k}$, or equivalently $2k \leq n$.

\begin{remark}
The upper bound in this result ($2k=n$) gives sets that saturate a known bound \cite{ghosh2004distinguishability,Nathanson-2005} for the size of one-way distinguishable sets of maximally entangled states on $\mathbb{C}^d\otimes \mathbb{C}^d$ ($d=2^n$). For $2k<n$, this produces (non-trivial) distinguishable sets, which is significant as it is known \cite{cosentino2013small} that many sets defined from $\mathcal P_n$ with less than $2^n$ operators (here $4^k < 2^n$) are not distinguishable even with positive partial transpose operations, and hence not with one-way LOCC.
\end{remark}

\subsection{Application: Sets of Indistinguishable Lattice States}

We have also been able to use the `operator structure road map' outlined above to find sets of lattice states \cite{2013positive,cosentino2013small, yu2012four} that are indistinguishable under one-way LOCC. The following is taken from \cite{lattice2019}.

Recall the two-qubit Bell states $\ket{\Phi_0}, \ket{\Phi_1}$ defined above. The rest of the Bell basis is given by:
\begin{equation*}
\ket{\Phi_2} = \frac{ \ket{01} - \ket{10}}{\sqrt{2}} \qquad \ket{\Phi_3} = \frac{ \ket{00} -\ket{11}}{\sqrt{2}} .
\end{equation*}
These states can be naturally identified with the Pauli matrices by $\ket{\Phi_i} = (I \otimes \sigma_i)\ket{\Phi_0}$ and where we write $I=\sigma_0$, $X=\sigma_1$, $Y=\sigma_2$, $Z=\sigma_3$.

The lattice states are a generalization of the Bell states which are very useful in their own right.

\begin{definition}
For $n \geq 1$, the class of {\bf lattice states} $\mathcal L_n$ are given by $n$-tensors of the Bell states;
\[
{\mathcal L}_n = \{ \ket{\Phi_i} : i \in \{0,1,2,3\} \}^{\otimes n} \subseteq \mathbb{C}^{2^n } \otimes  \mathbb{C}^{2^n} .
\]
\end{definition}

States in ${\mathcal L}_n$ can be identified with elements of the Pauli group $\mathcal{P}_{n} = \{ \otimes_{k = 1}^n \sigma_{i_k} \}$, using an extension of the Bell state identification above.

\begin{theorem}\label{LOCCLatticeExample}
For every $n > 1$ and $d = 2^n$, there exist sets of $m$ lattice states in $\mathbb{C}^d\otimes \mathbb{C}^d$ that are not distinguishable with one-way LOCC, where
\begin{eqnarray*}
m = \left\{ \begin{array}{ll} 2\sqrt{2d}-1 & \mbox{ if  $n$  is odd} \cr 3\sqrt{d}-1 & \mbox{ if  $n$  is even.} \end{array} \right.
\end{eqnarray*}
\end{theorem}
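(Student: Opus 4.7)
The plan is to apply Theorem~\ref{opthm} via the operator-structure roadmap sketched in the Remark: for each parity of $n$, I will exhibit Paulis $\{U_i\} \subseteq \mathcal{P}_n$ such that the operator system $\mathfrak{S}_0 = \mathrm{span}\{U_i^* U_j, I\}_{i \neq j}$ coincides with $\mathrm{span}(G)$ for a Pauli subgroup $G \leq \mathcal{P}_n$, and such that the decomposition of $\mathrm{span}(G)$ as $\bigoplus_k M_{m_k}(\mathbb{C}) \otimes I_{n_k}$ has some $m_k > n_k$, so no separating vector exists. Since every lattice state is of the form $(I \otimes U)\ket{\Phi}$ with $U$ a tensor of single-qubit Paulis, Theorem~\ref{opthm} will then imply one-way LOCC indistinguishability.

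I will take $G$ of size $2d$, i.e., an $(n+1)$-dimensional $\mathbb{F}_2$-subspace of $\mathcal{P}_n / \langle \pm iI\rangle \cong \mathbb{F}_2^{2n}$. For $n$ odd, let $G = \mathcal{P}_{(n+1)/2} \otimes I^{\otimes (n-1)/2}$; then $\mathrm{span}(G) = M_{\sqrt{2d}}(\mathbb{C}) \otimes I_{\sqrt{d/2}}$ is a single block with $m_1 = \sqrt{2d} > \sqrt{d/2} = n_1$. For $n$ even, let $G = \mathcal{P}_{n/2} \otimes \langle Z_{n/2+1}\rangle \otimes I^{\otimes (n/2-1)}$; diagonalizing the central element $Z_{n/2+1}$ splits the span into two blocks, giving $\mathrm{span}(G) \cong (M_{\sqrt{d}} \otimes I_{\sqrt{d}/2}) \oplus (M_{\sqrt{d}} \otimes I_{\sqrt{d}/2})$, each with $m_k = \sqrt{d} > \sqrt{d}/2 = n_k$. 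Either way $\mathrm{span}(G)$ is a C$^*$-algebra without a separating vector.

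To realize $\mathfrak{S}_0 = \mathrm{span}(G)$ with the stated cardinality, I choose two $\mathbb{F}_2$-subspaces $P, Q \leq G$ that are complementary, that is, $P \cap Q = \{I\}$ and $P + Q = G$, and set $\{U_i\} = P \cup Q$. For $n$ odd take $\dim P = \dim Q = (n+1)/2$, for instance $P$ the $Z$-type and $Q$ the $X$-type Paulis on the first $(n+1)/2$ qubits; for $n$ even take $\dim P = n/2 + 1$ and $\dim Q = n/2$, for instance $P = \mathrm{span}\{Z_1,\ldots,Z_{n/2+1}\}$ and $Q = \mathrm{span}\{X_1,\ldots,X_{n/2}\}$. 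By inclusion–exclusion $|P \cup Q| = |P| + |Q| - 1$, which evaluates to $2\sqrt{2d} - 1$ and $3\sqrt{d} - 1$ respectively.

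To confirm $\{U_i^* U_j\}_{i \neq j} \cup \{I\} = G$: each $g \in G \setminus \{I\}$ admits a unique decomposition $g = pq$ with $p \in P, q \in Q$ (since $P \oplus Q = G$), and $p \neq q$ (otherwise $g = p^2 = I$ modulo phase), so $p,q$ are distinct elements of $P \cup Q$ with $p^* q = g$; the identity is obtained from the $\{I\}$ summand of the operator system. Hence $\mathfrak{S}_0 = \mathrm{span}(G)$, which is closed under multiplication because $G$ is a group, and it has no separating vector, so Theorem~\ref{opthm} delivers indistinguishability. The main technical point is the algebra decomposition of $\mathrm{span}(G)$ in the even case, which is handled by diagonalizing the central Pauli $Z_{n/2+1}$ to split $\mathrm{span}(G)$ into two blocks; everything else reduces to elementary linear algebra over $\mathbb{F}_2$ and the structure theorem for finite-dimensional C$^*$-algebras applied to spans of Pauli subgroups.
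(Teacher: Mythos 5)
Your proposal is correct: you verify the hypotheses of Theorem~\ref{opthm} (that $\mathfrak{S}_0$ is the span of a Pauli subgroup, hence an algebra, with some block satisfying $m_k > n_k$), your identification $\mathfrak{S}_0 = \mathrm{span}(G)$ via the unique factorization $g = pq$ with $p \in P$, $q \in Q$, $P\cap Q = \{I\}$ is sound, your sets consist of tensor products of single-qubit Paulis and so are genuine lattice states, and the counts $|P|+|Q|-1$ give exactly $2\sqrt{2d}-1$ and $3\sqrt{d}-1$. The overall strategy is the same as the paper's (the operator-structure roadmap feeding into Theorem~\ref{opthm}, with the state set a union of two subgroups intersecting only in $I$), but your concrete construction differs. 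The paper uses a single family for all $n$: the states are $\langle Z_1,\dots,Z_k\rangle \cup \langle Z_{k+1},\dots,Z_n, X^{\otimes n}\rangle$, whose products span the algebra generated by $\{Z_1,\dots,Z_n,X^{\otimes n}\}$; this algebra has dimension $2^{n+1} = 2d > d$, which already rules out a separating vector without computing the block structure (it is in fact $2^{n-1}$ copies of $M_2$ with trivial multiplicity), and both parities then fall out of one optimization of $|S| = 2^k + 2^{n+1-k}-1$ over $k$. You instead tailor the group to the parity of $n$ -- the full Pauli algebra on $(n+1)/2$ qubits for odd $n$, and the full Pauli algebra on $n/2$ qubits tensored with the diagonal algebra $\langle Z_{n/2+1}\rangle$ for even $n$ -- splitting each into its $Z$-type and $X$-type parts. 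What your version buys is an explicit Wedderburn decomposition (one block $M_{\sqrt{2d}}\otimes I_{\sqrt{d/2}}$, or two blocks $M_{\sqrt{d}}\otimes I_{\sqrt{d}/2}$) so the failure of $n_k \ge m_k$ is visible by inspection, and a state set using only $I,X,Z$ factors; what the paper's version buys is a uniform construction for all $n$ together with the quicker dimension-counting argument ($\dim \mathfrak{S}_0 > d$ forces $m_k > n_k$ for some $k$), avoiding any case analysis on parity until the final minimization.
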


\begin{remark}
As discussed further in \cite{lattice2019}, this result is new and can be extended to so-called `generalized Pauli states', where a new proof is given of an established result, and which leads to an improvement for a studied subclass of states \cite{wangetal2016}. The following example illustrates the approach.
\end{remark}

\begin{example}
For an example in $\mathcal L_n$, with $n$ fixed, we can set
\begin{eqnarray*}
S_1 &=&\{ I^{\otimes i} \otimes Z \otimes I^{\otimes n-i-1}\}_{i = 0}^{k-1}  \\ S_2 &=&\{ I^{\otimes i} \otimes Z \otimes I^{\otimes n-i-1}\}_{i = k}^{n-1}  \cup \{ X^{\otimes n}\}.
\end{eqnarray*}
It is easy to check that the algebra generated by $S_1$ has dimension $2^k$; the algebra generated by $S_2$ has dimension $2^{n+1-k}$; and the algebra generated by $S_1 \cup S_2$ has dimension $2^{n+1}$. This gives us:
\begin{eqnarray*}
S = \left( \{ I ,Z \}^{\otimes k}\otimes I^{\otimes (n-k)} \right) &\cup & \left( I^{\otimes k} \otimes \{ I ,Z \}^{\otimes n-k}  \right) \\ &\cup& \left( { X}^{\otimes k} \otimes \{ X,Y \}^{\otimes n-k}  \right)
\end{eqnarray*}
with $|S| = 2^k + 2^{n-k+1} - 1$, which achieves its minimum when $k = \lfloor \frac{n}{2}+1 \rfloor$ and $|S| \in  \{2\sqrt{2d}-1, 3\sqrt{d}-1\}$.
\end{example}

\section{Graph Theory and Distinguishing Product States}

The following section contains a brief review of the main results and some applications from \cite{kribs2020vector}, in which we used graph theory to study the problem of distinguishing sets of product states via one-way LOCC. Our graph-theoretic work in LOCC is ongoing, and here we give a new graph-theoretic perspective and proof for the case that Alice only has access to a single qubit Hilbert space.

We shall write $G = (V,E)$ for a {\it simple graph} with vertex set $V$ and edge set $E$. For $v,w\in V$, we write $v \sim w$ if the edge $\{ v,w \}\in E$.  The {\it complement} of $G$ is the graph $\overline{G} = (V, \overline{E})$, where the edge set $\overline{E}$ consists of all two-element sets from $V$ that are not in $E$. Another graph $G'$ is a {\it subgraph} of $G$, written $G' \leq G$, if $V' \subseteq V$ and $E' \subseteq E$ with $v,w\in V'$ whenever $\{ v,w\} \in E'$.

Given a graph $G= (V,E)$, a function $\phi: V \rightarrow \mathbb{C}^d\backslash \{0\} $ is an {\it orthogonal representation} of $G$ if for all vertices $v_i \ne v_j\in V$,
\[
v_i \not\sim v_j \iff \langle \phi(v_i), \phi(v_j) \rangle = 0 .
\]
Orthogonal representations have been discussed in graph theory, for instance \cite{fallat2007minimum,lovasz1989orthogonal}. Note the biconditional in the definition, which is stronger than conditions for graph colouring. This allows us to uniquely define the graph associated with a function $\phi$.

We introduced a graph-theoretic perspective to distinguishing product states as follows.
Suppose we are given a set of product states $\{ \ket{\psi^A_k}\otimes \ket{\psi^B_k} \}_{k=1}^r$ on $\mathcal H_A \otimes \mathcal H_B$. The graph of these states from Alice's perspective is the unique graph $G_A$ with vertex set $V = \{ 1,2, \ldots, r\}$ such that the map $k \mapsto  \ket{\psi^A_k}$ is an orthogonal representation of $G_A$. Likewise, the graph of the states from Bob's perspective is the graph $G_B$ with vertex set $V$ such that $k \mapsto \ket{\psi^B_k}$ is an orthogonal representation of $G_B$.
Observe that by construction, the set of product states are mutually orthogonal precisely when Alice's graph is a subgraph of the complement of Bob's graph; that is, $G_A \leq \overline{G_B}$.

The following concepts from graph theory are central for us.
Given a graph $G = (V,E)$, a set of graphs $\{ G_i = (V_i, E_i) \}$ {\it covers} $G$ if $V = \cup_i V_i$ and $E = \cup_i E_i$.
A collection of graphs $\{G_i \}$ is a {\it clique cover} for $G$ if $\{G_i \}$ covers $G$ and if each of the $G_i$ is a complete graph (i.e., a clique).
The clique cover number $\mathrm{cc}(G)$ is the smallest possible number of subgraphs contained in a clique cover of $G$.
A clique cover can be thought of as a collection of (not necessarily disjoint) induced subgraphs of $G$, each of which is a complete graph (there is an edge between every pair of vertices) with the condition that every edge is contained in at least one of the cliques.

The following is one of our main results from \cite{kribs2020vector}, and gives a characterization of when product states are one-way LOCC distinguishable in terms of the underlying Alice and Bob graph structures and a related decomposition of Alice's Hilbert space. We note this is a corrected version of the theorem from \cite{kribs2020vector}.  The revision was made to  condition (3), as the earlier version gave a condition that was sufficient but not necessary.

\begin{theorem} \label{thm: one-way LOCC graphs} Given a set of product states in $\H_A \ot \H_B$, let $G_A$ and $G_B$ be the graphs of the states from Alice and Bob's perspectives, respectively.  Let $\phi: V_A \rightarrow \H_A$  be the association of vertices with  Alice's states and assume that the set $\{\phi(v): v\in V\}$ spans $\H_A$.

Then the states are distinguishable with one-way LOCC with Alice measuring first if and only if there exists
\begin{itemize}
\item[(1)] a graph $G$ satisfying $G_A \le G \le \overline{G_B}$,
\item[(2)] a clique cover $\{V_j \}_{j = 1}^k$ of $G$, and,
\item[(3)] a POVM $\{Q_j\}$ on $\mathcal H_A$ such that for all $v\in V_A$, $Q_j\phi(v) \ne 0$ implies that $v \in V_j$.
\end{itemize}
\end{theorem}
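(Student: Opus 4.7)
The plan is to prove both directions of the biconditional: first that the three listed ingredients suffice to construct a distinguishing one-way LOCC protocol, and then that any such protocol produces these ingredients. The forward direction is essentially a translation of the definitions, while the reverse---specifically verifying that we can arrange $G_A \le G$---is where the real work lies.

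For sufficiency, I would have Alice perform the POVM $\{Q_j\}$ and classically transmit her outcome $j$ to Bob. By condition (3), if Alice records outcome $j$ then every state $\ket{\phi^A(v)} \ot \ket{\phi^B(v)}$ with $v \notin V_j$ satisfies $Q_j \phi(v)=0$ and is therefore ruled out; the remaining candidates are indexed by $v \in V_j$. Since $V_j$ is a clique in $G$ and $G \le \overline{G_B}$, each pair $v \neq w$ in $V_j$ is adjacent in $\overline{G_B}$, hence non-adjacent in $G_B$, which by the orthogonal representation definition forces $\la \phi^B(v), \phi^B(w)\ra = 0$. Bob is then left to distinguish a set of mutually orthogonal pure states on $\H_B$ and completes the identification with the associated projective measurement.

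For necessity, suppose the states are distinguishable by a one-way LOCC protocol $\{A_k \ot B_{k,j}\}$, and (as noted following Definition 1) take each $A_k = c_k \kb{\alpha_k}{\alpha_k}$ to be a nonnegative scalar multiple of a rank-one projection. Set $Q_k := A_k$ and $V_k := \{v \in V : A_k \phi(v) \ne 0\} = \{v : \la \alpha_k, \phi(v)\ra \ne 0\}$, so condition (3) holds by construction. The completeness relation $\sum_k Q_k = I_A$ together with $\phi(v) \ne 0$ forces every $v$ to lie in some $V_k$; and since Bob must perfectly distinguish the pure states $\{\phi^B(v) : v \in V_k\}$ using only his POVM $\{B_{k,j}\}_j$, those states must be mutually orthogonal, meaning $V_k$ is a clique in $\overline{G_B}$. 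Defining $G$ to be the graph on $V$ with edge set $\bigcup_k \binom{V_k}{2}$ then makes $\{V_k\}$ a clique cover of $G$ satisfying $G \le \overline{G_B}$.

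The main obstacle is verifying $G_A \le G$, since this is the step that meaningfully uses the non-orthogonality structure on Alice's side. For any edge $\{v,w\}$ of $G_A$ we have $\la \phi^A(v), \phi^A(w)\ra \neq 0$, and inserting $I_A = \sum_k A_k$ along with the rank-one form of $A_k$ yields
\[
\la \phi^A(v), \phi^A(w)\ra = \sum_k c_k \, \la \phi^A(v), \alpha_k\ra \, \la \alpha_k, \phi^A(w)\ra.
\]
Non-vanishing of the left side forces some index $k$ to satisfy both $\la \alpha_k, \phi^A(v)\ra \neq 0$ and $\la \alpha_k, \phi^A(w)\ra \neq 0$, placing both $v$ and $w$ in $V_k$. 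Hence $\{v,w\}$ is an edge of $G$, completing the sandwich $G_A \le G \le \overline{G_B}$ and the argument.
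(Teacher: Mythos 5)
Your proof is correct: the sufficiency direction is the straightforward protocol translation, and your necessity argument (refine Alice to rank-one elements $A_k$, let $V_k$ be the vertices surviving outcome $k$, use perfect discrimination on Bob's side to force each $V_k$ to be a clique of $\overline{G_B}$, and use $\sum_k A_k = I_A$ to force $G_A \le G$) is exactly the natural route, consistent with how the theorem is applied later in the paper (e.g.\ in the single-qubit case). Note that this review states the theorem without proof, deferring to the cited original work, so there is no in-text proof to compare against line by line; the only cosmetic gaps in your write-up are discarding zero POVM elements and invoking the spanning hypothesis to ensure each $V_k$ is nonempty, both trivial.
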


The focus on clique decompositions gives tools for building optimal POVMs. We include an example showing a POVM that is not a von Neumann measurement (i.e., the operators are not mutually orthogonal projections).

\begin{example} Let Alice's (unnormalized) states be given in $\C^3$ by
\bee
\ket{\phi_{0}} = \ket{0} +\ket{1}  &\qquad& \ket{\phi_{1}} = \ket{0} + \ket{2} \\
\ket{\phi_{2}} = \ket{0} -\ket{1}  &\qquad& \ket{\phi_{3}} = \ket{0} - \ket{2}  ,
\eee
and Bob's states given so that $G_A = \overline{G_B} = C_4$, the 4-cycle graph.  The clique cover number of a 4-cycle is 4, which is bigger than our dimension, but we are still able to distinguish the states using the following POVM. We define
\bee
\ket{\psi_{0}} = \ket{0} +\ket{1}+\ket{2}  &\qquad& \ket{\psi_{1}} =-\ket{0} +\ket{1}+\ket{2}  \\
\ket{\psi_{2}} =\ket{0} -\ket{1}+\ket{2} &\qquad& \ket{\psi_{3}} = \ket{0} +\ket{1}-\ket{2}.  \eee
For each $j$, we can then define $Q_j = \frac{1}{4} \kb{\psi_{j}}{\psi_{j}} $.    It is easy to check that $\sum_j Q_j = I$ and that each $Q_j$ picks out an edge of $C_4$ as in the theorem conditions, so we can apply the theorem to show that the states are one-way LOCC distinguishable.
\end{example}

We point the interested reader to \cite{kribs2020vector} for some consequences of this result and related results. For the rest of this section, we will focus on the case of a `low rank' sender, in which this theorem can be entirely stated in graph-theoretic terms.

\subsection{Single Qubit Sender and Graph Theory}

A basic result in LOCC theory  \cite{bennett1999unextendible,divincenzo2003unextendible,halder2020distinguishability} shows that any set of orthogonal product states in $\mathbb{C}^{2} \otimes  \mathbb{C}^{d}$ for arbitrary $d\geq 2$ can be distinguished  via full (two-way) LOCC. This is readily seen to not be the case for one-way LOCC. As a simple example, consider the two-qubit set $\{\ket{00},\ket{10}, \ket{+1}, \ket{-1}\}$ where $\ket{+}=\frac{1}{\sqrt{2}}(\ket{0}+\ket{1})$ and $\ket{-}=\frac{1}{\sqrt{2}}(\ket{0}-\ket{1})$. This set of four two-qubit states cannot be distinguished by one-way LOCC with Alice going first; they can however be distinguished by one-way LOCC with Bob going first and hence also by two-way LOCC.


Nevertheless, one can characterize when one-way distinguishability is possible in the single qubit sender case. As proved in \cite{Walgate-2002} (Theorem~1), the states must take a particularly nice form, in terms of orthogonality on Alice's side versus corresponding orthogonalities on Bob's side. That said, one could argue that the condition from \cite{Walgate-2002} is perhaps not so operationally simple to apply to easily identify when states are distinguishable, if the set is large for example. Here we show how, in the case of product states, Theorem~\ref{thm: one-way LOCC graphs} can be refined for a single qubit sender to yield an entire graph-theoretic set of testable conditions for one-way distinguishability.

\begin{theorem}\label{singlequbit}
A set of orthonormal product states in $\mathbb{C}^{2} \otimes  \mathbb{C}^{d}$, for $d \geq 2$, is distinguishable via one-way LOCC with Alice going first if and only if there is some graph between the two graphs $G_A$ and $\overline{G_B}$ with clique cover number at most two; that is, there is a graph $G$ such that
\begin{equation}\label{singlequbitgraphcond}
G_A \leq G \leq \overline{G_B} \quad \mathit{and} \quad \mathrm{cc}\,(G) \leq 2.
\end{equation}
\end{theorem}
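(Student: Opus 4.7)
The proof naturally splits into two implications, both routed through Theorem~\ref{thm: one-way LOCC graphs}.

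For the backward direction, suppose we are given $G$ with $G_A \le G \le \overline{G_B}$ and a clique cover $\{V_1, V_2\}$ of $G$ (so $V_1 \cup V_2 = V$). I would set $A_i := V \setminus V_i$; these are disjoint and $A_i \subseteq V_{3-i}$. The key observation is that for any $u \in A_1$ and $v \in A_2$, the edge $\{u,v\}$ sits in neither clique and hence is not in $G$, so by $G_A \le G$ it is not in $G_A$ either, forcing $\phi(u) \perp \phi(v)$. Thus $\mathrm{span}\,\phi(A_1)$ and $\mathrm{span}\,\phi(A_2)$ are orthogonal subspaces of $\C^2$. A short case split then produces the required POVM: if both subspaces are at most one-dimensional, pick an orthonormal basis $\{w_1, w_2\}$ of $\C^2$ with $w_i \in \phi(A_i)^\perp$ and set $Q_i = \kb{w_i}{w_i}$; otherwise one of the subspaces spans $\C^2$, forcing the opposite $A_i$ to be empty, so $V_i = V$ is a clique in $\overline{G_B}$ and the trivial POVM $\{I\}$ works with Bob distinguishing unaided. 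Either way, the hypotheses of Theorem~\ref{thm: one-way LOCC graphs} are satisfied and distinguishability follows.

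For the forward direction, assuming the states are distinguishable, I would invoke Theorem~1 of \cite{Walgate-2002}, the single-qubit-sender characterisation recalled earlier in this section. This gives an orthonormal basis $\{\ket{0'}, \ket{1'}\}$ of $\C^2$ such that, writing $\phi(v) = \alpha_v \ket{0'} + \beta_v \ket{1'}$, the two sets $\{\phi^B(v) : \alpha_v \neq 0\}$ and $\{\phi^B(v) : \beta_v \neq 0\}$ are each pairwise orthogonal on Bob's side. Setting $V_0 = \{v : \alpha_v \neq 0\}$ and $V_1 = \{v : \beta_v \neq 0\}$ yields a cover of $V$ (since $\phi(v) \neq 0$) whose members are cliques in $\overline{G_B}$. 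Define $G$ to have edge set consisting precisely of the pairs contained in $V_0$ or in $V_1$; then $G \le \overline{G_B}$ and $\{V_0, V_1\}$ is a clique cover of $G$ of size at most two. To verify $G_A \le G$, expand $\bk{\phi(u)}{\phi(v)} = \overline{\alpha_u}\alpha_v + \overline{\beta_u}\beta_v$: any nonzero inner product forces one of the two summands to be nonzero, which places $u$ and $v$ jointly in $V_0$ or jointly in $V_1$, hence $\{u,v\}$ is an edge of $G$.

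The forward direction is the technical heart, leaning on the Walgate--Hardy reduction to a two-outcome projective measurement on the qubit. A fully self-contained derivation from Theorem~\ref{thm: one-way LOCC graphs} alone would require showing that any rank-one POVM on $\C^2$ produced by that theorem — which may involve many pairwise non-orthogonal measurement directions, as in a Mercedes--Benz configuration — can always be replaced by an orthogonal pair underlying a size-two clique cover of some (possibly different) graph between $G_A$ and $\overline{G_B}$; this reduction is essentially the content of the qubit-sender argument in \cite{Walgate-2002}.
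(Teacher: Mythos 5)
Your proposal is correct, but it reaches the theorem by a partly different route than the paper. In the backward direction the two arguments are close in spirit but packaged differently: you feed an explicitly constructed two-element projective POVM (built from the complements $A_i = V\setminus V_i$, whose Alice-spans are mutually orthogonal in $\C^2$) into condition (3) of Theorem~\ref{thm: one-way LOCC graphs}, which neatly collapses the paper's case analysis (the paper instead builds the protocol by hand, splitting into $\mathrm{cc}(G)=1$, $\mathrm{cc}(G)=2$ with $V_0 = V(G_1)\cap V(G_2)$ empty or not, and choosing two non-adjacent vertices $v_1, v_2$ whose Alice states give the measurement basis); your uniform construction is arguably cleaner, and the degenerate subcases (a span of dimension $0$ or $2$) are handled correctly. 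The real divergence is the forward direction: the paper stays self-contained, taking the graph $G$, clique cover and POVM supplied by Theorem~\ref{thm: one-way LOCC graphs} and arguing that on a qubit each $Q_j$ must be diagonal in every orthogonal pair of Alice states, which forces $\mathrm{cc}(G)\le 2$; you instead invoke Theorem~1 of \cite{Walgate-2002} and build $G$ directly from the supports $V_0=\{v:\alpha_v\neq 0\}$, $V_1=\{v:\beta_v\neq 0\}$ in the Walgate--Hardy basis, verifying $G_A\le G\le \overline{G_B}$ from the expansion of $\bk{\phi(u)}{\phi(v)}$. Both are valid; the paper's version buys independence from \cite{Walgate-2002} (which it cites only for context) at the cost of a rather compressed POVM argument, while yours buys a transparent and fully rigorous forward implication at the cost of importing the qubit-sender reduction — exactly the trade-off you flag in your closing remark, and what you describe there as the missing reduction is essentially the content of the paper's own forward argument.
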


\begin{proof}
For the forward direction, our previous theorem gives the existence of $G$ with a clique cover and POVM $\{Q_j\}$ such that $\phi(v_1)^*\phi(v_2) =0$ implies $\phi(v_1)^* Q_j\phi(v_2) = 0$ for all $j$. This implies that either $\mathrm{cc}(G) = 1$ or else each $Q_j$ is diagonal in the $\{ \phi(v_1), \phi(v_2) \}$ basis. It follows that $\mathrm{cc}(G) \le 2$.

For the backward direction, assume a graph $G$ exists that satisfies the conditions of Eq.~(\ref{singlequbitgraphcond}). We shall consider the two cases as determined by the clique cover number of $G$.

Firstly, if $\mathrm{cc}\,(G) =1$, then $G$ is a complete graph. Since $|G_A| = |G_B|$, $G$ must contain all the vertices of $G_B$, hence $\overline{G_B}$ is a complete graph and all the vertices of $G$ are pairwise disconnected in the graph $G_B$. Thus, it follows that all of Bob's states are pairwise orthogonal in $\mathcal H_B = \mathbb{C}^d$ (whether or not Alice's states are orthogonal). Hence the full set of product states on $\mathcal H_A \otimes \mathcal H_B = \mathbb{C}^{2} \otimes  \mathbb{C}^{d}$ are one-way distinguishable, simply by a measurement that Bob can perform on his states, independent of what Alice does or communicates to Bob.

Now suppose that $\mathrm{cc}\,(G) =2$. Let $\{ G_1, G_2\}$ be a clique cover of $G$, and let $V_0 = V(G_1) \cap V(G_2)$ be the set of vertices in $G$ that are connected to one (and hence every) vertex in each of $G_1$ and $G_2$. Note that $V_0$ is a proper subset of $V(G_i)$, for $i=1,2$, as otherwise $G_1$ would be a subgraph of $G_2$, or vice-versa, and this would incorrectly imply that $\mathrm{cc}\,(G) =1$.

If $V_0$ is empty, then $G_1$ and $G_2$ are disconnected and each is a subgraph of $\overline{G_B}$. As such, Bob's states corresponding to vertices in $G_1$ are mutually orthogonal, and the same is true of his states corresponding to vertices in $G_2$. Moreover, we can define orthogonal (one-dimensional) subspaces of $\mathcal H_A$ by $\mathcal H_i = \mathrm{span}\, \{ \phi_A(v) : v\in V(G_i)\}$, for $i=1,2$. The states are thus distinguishable via one-way LOCC with Alice first performing a measurement defined by the Hilbert space decomposition $\mathcal H_A = \mathcal H_1 \oplus \mathcal H_2$, then sending the outcome ($j=$ 1 or 2) to Bob, who then performs a measurement defined by the orthogonal states $\{ \phi_B(v) : v\in V(G_j)\}$ to determine the state.

If $V_0$ is non-empty, let $v_1 \in V(G_1)\setminus V_0$, and choose $v_2 \in V(G_2)$ such that $v_1$ and $v_2$ have no edge in $G$ (and hence also in $G_A$) connecting them. Note that such a vertex exists in $V(G_2)$ as otherwise $G=G_1\cup G_2$ would be a single clique and so $\mathrm{cc}\,(G)=1$. Also necessarily $v_2\in V(G_2)\setminus V_0$, as $v_1$ connects with all vertices in $V_0$. It follows that $\ket{\psi_i} = \phi_A(v_i)$, $i=1,2$, are orthogonal and hence form an orthonormal basis for $\mathcal H_A = \mathbb{C}^2$. Furthermore, $\bk{\psi_i}{\phi_A(w_i)}\neq 0$ for all $w_i \in V(G_i)$ and $i=1,2$. This sets up a one-way LOCC protocol as follows: Alice measures in the basis $\{ \ket{\psi_1}, \ket{\psi_2}\}$ for $\mathcal H_A$, and communicates the outcome to Bob. As $G\leq \overline{G_B}$ and $G_j$ is a clique, the states $\{ \phi_B(v) : v\in V(G_j)\}$ are mutually orthogonal in $\mathcal H_B$, and so Bob can determine the state by performing a measurement defined by these states and the projection onto the orthogonal complement of their span. This completes the proof.
\end{proof}

It is fairly straightforward to give examples that satisfy $G_A = \overline{G_B}$ and $\mathrm{cc}(G_A)\leq 2$. For instance, the set $\{ \ket{00}, \ket{01}, \ket{1+}, \ket{1-}\}$, where $\ket{+},\ket{-}$ in this case is any qubit basis different than the standard basis. Here, Alice would measure in the standard basis, and then Bob would measure in the basis suggested by Alice's outcome communicated to him. A simple example of a distinguishable set for which $G_A$ is a proper subgraph of $\overline{G_B}$ is given by the standard two-qubit basis $\{ \ket{00}, \ket{01}, \ket{10}, \ket{11}\}$ (left as an easy exercise: $\mathrm{cc}(G_A)=2<4 = \mathrm{cc}(\overline{G_B})$).

We finish by presenting a `nice' indistinguishable example, in that $G_A = \overline{G_B}$, but nevertheless the states fail to be distinguishable due to the failure of the clique cover condition.
\begin{example}
Consider the (unnormalized) states $\ket{\psi_{i}} \in \mathbb{C}^{2} \otimes \mathbb{C}^{3} $, for $1\leq i \leq 5$, defined as follows:
\begin{align*}
\ket{\psi_{1}}& = \ket{1} \otimes \ket{1}\\
\ket{\psi_{2}} &=  \ket{0} \otimes \left(   \ket{0} + \ket{1}  \right)\\
\ket{\psi_{3}} &=  \ket{0} \otimes \left(   \ket{0} - \ket{1}  \right)\\
\ket{\psi_{4}} &=  \left(   \ket{0} + \ket{1}  \right)  \otimes \ket{2} \\
\ket{\psi_{5}} &=  \left(   \ket{0} - \ket{1}  \right)  \otimes \ket{2} \\
\end{align*}

These states {\it can} be distinguished with full LOCC operations, with Bob measuring first followed by Alice and then once more by Bob. We show that Bob's initial measurement is necessary and that one-way distinguishability is impossible with Alice going first.

Observe here that $G_A = \overline{G}_{B}$. Moreover, the complement of Bob's graph has clique cover number
$
\mathrm{cc}(\overline{G_{B})} = 4 > 2,
$
with a clique cover of minimal size given by the vertex sets:
\begin{eqnarray*}
\Big\{ \{1,5\} , \,\, \{ 1,4\}, \,\, \{2,3,4\}, \,\,
\{2,3,5\} \Big\}
\end{eqnarray*}
Hence the set of states is not distinguishable via one-way LOCC with Alice going first.
\end{example}

\section{Conclusion}
One of the appeals of quantum information theory is how it builds on expertise in a wide range of areas in physics, mathematics, computer science, and engineering.  This review paper highlights several aspects of the fruitful interplay between operator theory and questions of local quantum state distinguishability; and the previous section adds graph theory into the mix. One-way LOCC is a simply-constructed problem with real physical implications, and our work continues to develop effective tools to study it.

\strut

{\noindent}{\it Acknowledgements.} D.W.K. was partly supported by NSERC. C.M. was partly supported by Mitacs and the African Institute for Mathematical Sciences. R.P. was partly supported by NSERC. M.N. acknowledges the support of Saint Mary's College of California, where most of this work was completed.

\bibliographystyle{plain}

\bibliography{KMNP_Bibfile}

\end{document}